\renewcommand{\le}{\leqslant}
\renewcommand{\ge}{\geqslant}
\newcommand{\eps}{\varepsilon}
\newcommand{\emp}{\emptyset}
\newcommand{\Sig}{\Sigma}
\newcommand{\sig}{\sigma}
\newcommand{\noin}{\noindent}
\newcommand{\bi}{\begin{itemize}}
\newcommand{\ei}{\end{itemize}}
\newcommand{\be}{\begin{enumerate}}
\newcommand{\ee}{\end{enumerate}}
\newcommand{\bd}{\begin{description}}
\newcommand{\ed}{\end{description}}
\newcommand{\bq}{\begin{quote}}
\newcommand{\eq}{\end{quote}}
\newcommand{\cD}{{\mathcal D}}
\newcommand{\cN}{{\mathcal N}}
\newcommand{\one}{{\mathbf 1}}
\newcommand{\e}{\emph}
\renewcommand{\phi}{\varphi}
\newcommand{\tr}[1]{\stackrel{#1}{\longrightarrow}}
\title{Most Complex Deterministic Union-Free Regular Languages\thanks{This work was supported by the Natural Sciences and Engineering Research Council of Canada 
grant No.~OGP0000871.}
}
\author{Janusz A. Brzozowski\inst{1} 
\and Sylvie Davies\inst{2}
}
\authorrunning{J. A. Brzozowski, 
S. Davies
 }  
\titlerunning{Most Complex Union-Free Languages}
\institute{David R. Cheriton School of Computer Science, University of Waterloo \\
Waterloo, ON, Canada N2L 3G1\\
{\tt brzozo@uwaterloo.ca}
\and
Department of Pure Mathematics, University of Waterloo \\
Waterloo, ON, Canada N2L 3G1\\
{\tt sldavies@uwaterloo.ca}
}
\begin{document}
\maketitle

\begin{abstract}
A regular language $L$ is \emph{union-free} if it can be represented by a regular expression without the union operation. A union-free language is \emph{deterministic} if it can be accepted 
by a deterministic \emph{one-cycle-free-path} finite automaton; this is an automaton which has one final state and  exactly one cycle-free path from any state to the final state. 
Jir\'askov\'a and Masopust proved that the state complexities of the basic operations reversal, star, product, and boolean operations in deterministic union-free languages are exactly the same as those in the class of all regular languages. 
To prove that the bounds are met they used five  types of automata, involving eight  types of transformations of the set of states of the automata. 
We show that for each $n\ge 3$ there exists one  ternary witness  of state complexity $n$ that meets the bound for reversal and product.
Moreover, the restrictions of this witness to  binary alphabets meet the bounds for star and boolean operations. 
We also show that the tight upper bounds on the state complexity of binary operations that take arguments over different alphabets are the same as those for arbitrary regular languages. 
Furthermore, we prove that the maximal syntactic semigroup of a union-free language has $n^n$ elements, as in the case of regular languages, and 
that the maximal state complexities of atoms of union-free languages  are the same as those for regular languages. 
Finally, we prove that there exists a most complex union-free language that meets the bounds for all these complexity measures. 
Altogether this proves that the complexity measures above cannot distinguish  union-free languages from regular languages. 
\medskip

\noin
{\bf Keywords:}
atom, boolean operation, concatenation, different alphabets, most complex, one-cycle-free-path, regular, reversal,  star, state complexity, syntactic semigroup, transition semigroup, union-free
\end{abstract}

\section{Introduction} 
Formal definitions are postponed until Section~\ref{sec:preliminaries}.

The class of regular languages over a finite alphabet $\Sigma$ is the smallest class of languages containing the empty language $\emp$, the language $\{\eps\}$, where $\eps$ is the empty word, and the \emph{letter languages} $\{a\}$ for each $a\in \Sigma$, and closed under the operations of union, concatenation, and (Kleene) star. 
Hence each regular language can be written as a finite expression involving the above basic languages and operations.
An expression defining a regular language in this way is called a \emph{regular expression}.
Because regular languages are also closed under complementation, we may also consider regular expressions that allow complementation, which are called \e{extended regular expressions}.
In this paper we deal exclusively with regular languages.

A natural question is: what kind of languages are defined if one of the operations in the definitions given above is missing? If the star operation is removed from the extended regular expressions we get the well known \emph{star-free} languages~\cite{BrSz15b,McPa71,Sch65}, which have been extensively studied. Less attention was given to classes defined by removing an operation from ordinary regular expressions, but recently language classes defined without union or concatenation have been studied.

If we remove some operations from regular expressions, we obtain the following classes of languages:
\bd
\item[Union only] subsets of $\{\eps\} \cup \Sig$. 
\item[Concatenation only] $\emp$ and $\{w\}$ for each $w \in \Sig^*$.
\item[Star only] $\emp$, $\{\eps\}$, $\{a\}$ for each $a\in \Sigma$, and $\{a\}^*$ for each $a\in \Sigma$.
\item[Union and Concatenation] Finite languages.
\item[Concatenation and Star] These are the \emph{union-free} languages that constitute the main topic of this paper. 
\item[Union and Star] These are the \emph{concatenation-free} languages that were studied in~\cite{CDE01,HoKu17,KuWe17}.
\ed

Union-free regular languages were first considered by Brzozowski~\cite{Brz62} in 1962 under the name \emph{star-dot} regular languages, where \emph{dot} stands for concatenation.
He proved that every regular language is  a union of union-free languages~\cite[p.~216, Theorem 9.5]{Brz62}\footnote{Terminology changed to that of the present paper.}.
Much more recently, in 2001, Crvenkovi\'c, Dolinka and \'Esik~\cite{CDE01} studied equations satisfied by union-free regular languages, and proved that the class of 
these languages cannot be axiomatized by a finite set of equations. This is also known to be true for the class of all regular languages.
In 2006 Nagy studied union-free languages in detail and characterized them in terms of nondeterministic finite automata (NFAs) recognizing them~\cite{Nag06}, which  he called \emph{one-cycle-free-path} NFAs.
In 2009 minimal union-free decompositions of regular languages were studied in~\cite{AfGo09} by Afonin and Golomazov. They also presented a new algorithm for deciding whether a given deterministic finite automaton (DFA) accepts a union-free language.
Decompositions of regular languages in terms of union-free languages were further  studied  by Nagy in 2010~\cite{Nag10}.
The state complexities of operations on union-free languages were examined in 2011 by 
Jir\'askov\'a and Masopust~\cite{JiMa11}, who proved that the state complexities of basic operations on these languages are the same as those in the class of all regular languages.
It was shown in~\cite{JiMa11} that the class of languages defined by DFAs with the one-cycle-free-path property is a proper subclass of that defined by  one-cycle-free-path NFAs; the former class is called the class of \emph{deterministic union-free} languages.
In~2012 Jir\'askov\'a and Nagy~\cite{JiNa12} proved that the class of finite unions of  deterministic union-free languages is a proper subclass of the class of regular languages.
They also showed that every deterministic union-free language is accepted by a special kind of a one-cycle-free-path DFA called a \emph{balloon} DFA.
A summary of the properties of union-free languages was presented in 2017 in~\cite{HoKu17}.

\section{Preliminaries}
\label{sec:preliminaries}
Let $L$ be a regular language.
We define the \e{alphabet} of $L$ to be the set of letters which appear \e{at least once} in a word of $L$.
For example, consider the language $L = \{a,ab,ac\}$ and the subset $K = \{a,ac\}$; we say $L$ has alphabet $\{a,b,c\}$ and $K$ has alphabet $\{a,c\}$.

A \emph{deterministic finite automaton (DFA)} is a quintuple
$\cD=(Q, \Sigma, \delta, q_0,F)$, where
$Q$ is a finite non-empty set of \emph{states},
$\Sig$ is a finite non-empty \emph{alphabet},
$\delta\colon Q\times \Sig\to Q$ is the \emph{transition function},
$q_0\in Q$ is the \emph{initial} state, and
$F\subseteq Q$ is the set of \emph{final} states.
We extend $\delta$ to functions $\delta\colon Q\times \Sig^*\to Q$ and $\delta\colon 2^Q\times \Sig^*\to 2^Q$ as usual (where $2^Q$ denotes the set of all subsets of $Q$).
A~DFA $\cD$ \emph{accepts} a word $w \in \Sigma^*$ if ${\delta}(q_0,w)\in F$. The \e{language accepted by $\cD$} is the set of all words accepted by $\cD$, and is denoted by $L(\cD)$. If $q$ is a state of $\cD$, then the language $L_q(\cD)$ of $q$ is the language accepted by the DFA $(Q,\Sigma,\delta,q,F)$. 
A state is \emph{empty} (or \emph{dead} or a \emph{sink state}) if its language is empty. Two states $p$ and $q$ of $\cD$ are \emph{equivalent} if $L_p(\cD) = L_q(\cD)$. 
A state $q$ is \emph{reachable} if there exists $w\in\Sig^*$ such that $\delta(q_0,w)=q$.
A DFA $\cD$ is \emph{minimal} if it has the smallest number of states among all DFAs accepting $L(\cD)$.
We say a DFA has a \e{minimal alphabet} if its alphabet is equal to the alphabet of $L(\cD)$.
It is well known that a DFA with a minimal alphabet is minimal if and only if all of its states are reachable and no two states are equivalent.

A \emph{nondeterministic finite automaton (NFA)} is a quintuple
$\mathcal{N}=(Q, \Sigma, \delta, I,F)$, where
$Q$,
$\Sigma$ and $F$ are as in a DFA, 
$\delta\colon Q\times \Sigma\to 2^Q$, and
$I\subseteq Q$ is the \emph{set of initial states}. 
Each triple $(p,a,q)$ with $p,q\in Q$, $a\in\Sig$ is a \emph{transition}  if $q\in \delta(p,a)$.
A sequence $((p_0,a_0,q_0), (p_1,a_1,q_1), \dots, (p_{k-1},a_{k-1},q_{k-1}))$
 of transitions, where $p_{i+1}=q_i$ for $i=0, \dots, k-2$ is a \emph{path} in $\cN$.
The word $a_0a_1\cdots a_{k-1}$ is the word \emph{spelled} by the path. 
A word $w$ is \emph{accepted} by $\cN$ is there exists a path with $p_0\in I$ and $q_{k-1}\in F$ that spells $w$.
If $q\in \delta(p,a)$ we also use the notation $p \xrightarrow{a} q$. We extend this notation also to words, and write 
$p \xrightarrow{w} q$ for $w\in\Sig^*$.

The \emph{state complexity} of a regular language $L$, denoted by $\kappa(L)$, is the number of states in the minimal DFA accepting $L$. 
Henceforth we frequently refer to state complexity as simply \e{complexity}, and we denote a language of complexity $n$ by $L_n$, and a DFA with $n$ states by $\cD_n$.

The \emph{state complexity} of a regularity-preserving \e{unary} operation $\circ$ on regular languages
is the maximal value of $\kappa(L^\circ)$, expressed as a function of one parameter $n$, where $L$ varies over all regular languages with complexity at most $n$.
For example, the state complexity of the reversal operation is $2^n$; it is known that if $L$ has complexity at most $n$, then $\kappa(L^R) \le 2^n$, and furthermore this upper bound is tight in the sense that for each $n \ge 1$ there exists a language $L_n$ such that $\kappa(L_n^R) = 2^n$.
In general, to show that an upper bound on $\kappa(L^\circ)$ is tight, we need to exhibit a sequence $(L_n \mid n\ge k)=(L_k,L_{k+1},\dots)$, called a \emph{stream}, of languages of each complexity $n \ge k$ (for some small constant $k$) that meet this upper bound. 
Often we are not interested in the special-case behaviour of the operation that may occur at very small values of $n$; the parameter $k$ allows us to ignore these small values and simplify the statements of results.

The \emph{state complexity} of a regularity-preserving \e{binary} operation $\circ$ on regular languages is the maximal value of $\kappa(L' \circ L)$, epxressed as a function of two parameters $m$ and $n$, where $L'$ varies over all regular languages of complexity at most $m$ and $L$ varies over all regular languages of complexity at most $n$.
In this case, to show an upper bound on the state complexity is tight, we need to exhibit two  classes $(L'_{m,n}\mid  m \ge h, n\ge k)$ and $( L_{m,n} \mid m \ge h, n \ge k)$ of languages meeting the bound;
the notation $L'_{m,n}$ and $L_{m,n}$ implies that  $L'_{m,n}$ and $L_{m,n}$ depend on both $m$ and $n$. 
However, in most cases studied in the literature, it is enough to use witness streams $(L'_m \mid m \ge  h)$ and $(L_n \mid n\ge k)$, where  $L'_m$ is independent of $n$ and $L_n$ is independent of $m$. 

For binary operations we consider two types of state complexity: \e{restricted} and \e{unrestricted} state complexity. For restricted state complexity the operands of the binary operations are required to have the same alphabet. For unrestricted state complexity the alphabets of the operands may differ. See~\cite{BrSi17b} for more details.

Sometimes the same stream can be used for both operands of a binary operation, but this is not always possible. For example,  for boolean operations when $m=n$, the state complexity of 
$L_n\cup L_n=L_n$ is $n$, whereas the upper bound is $mn=n^2$.
However, in many cases the second language  is a "dialect" of the first, that is, it ``differs only slightly'' from the first. 
A \emph{dialect}  of $L_n(\Sig)$ is a language obtained from $L_n(\Sig)$  by  deleting some letters of $\Sigma$ in the words of $L_n(\Sig)$ -- by this we mean that words containing these letters are deleted -- or replacing them by letters of another alphabet $\Sigma'$.
In this paper we consider only the cases where $\Sig=\Sig'$,
and we encounter only two types of dialects:
\be
\item
A dialect in which some letters were deleted; for example, $L_n(a,b)$ is a dialect of $L_n(a,b,c)$ with $c$ deleted, and $L_n(a,-,c)$ is a dialect with $b$ deleted.
\item
 A dialect in which the roles of two letters are exchanged; for example, $L_n(b,a)$ is such a dialect of $L_n(a,b)$.
\ee
These two types of dialects can be combined, for example, in $L_n(a,-,b)$ the letter $c$ is deleted, and $b$ plays the role that $c$ played originally.
The notion of dialects also extends to DFAs; for example, if $\cD_n(a,b,c)$ recognizes $L_n(a,b,c)$ then $\cD_n(a,-,b)$ recognizes the dialect $L_n(a,-,b)$.

We use $Q_n=\{0,\dots,n-1\}$ as our basic set with $n$ elements.
A \emph{transformation} of $Q_n$ is a mapping $t\colon Q_n\to Q_n$.
The \emph{image} of $q\in Q_n$ under $t$ is denoted by $qt$, and this notation is extended to subsets of $Q_n$.
The \e{preimage} of $q \in Q_n$ under $t$ the set $qt^{-1} = \{ p \in Q_n : pt = q\}$, and this notation is extended to subsets of $Q_n$ as follows: $St^{-1} = \{ p \in Q_n : pt \in S\}$.
The \emph{rank} of a transformation $t$ is the cardinality of $Q_nt$.
If $s$ and $t$ are transformations of $Q_n$, their composition is  denoted $st$ and we have $q(st) = (qs)t$ for $q \in Q_n$.
The $k$-fold composition $tt \dotsb t$ (with $k$ occurences of $t$) is denoted $t^k$, and for $S \subseteq Q_n$ we define $St^{-k} = S(t^k)^{-1}$.
Let $\mathcal{T}_{Q_n}$ be the set of all $n^n$ transformations of $Q_n$; then $\mathcal{T}_{Q_n}$ is a monoid under composition. 

For $k\ge 2$, a transformation 
$t$ of a set $P=\{q_0,q_1,\ldots,q_{k-1}\} \subseteq Q_n$ is a \emph{$k$-cycle}
if $q_0t=q_1, q_1t=q_2,\ldots,q_{k-2}t=q_{k-1},q_{k-1}t=q_0$.
This $k$-cycle is denoted by $(q_0,q_1,\ldots,q_{k-1})$, and leaves the states in $Q_n\setminus P$ unchanged.
A~2-cycle $(q_0,q_1)$ is called a \emph{transposition}.
A transformation  that sends state $p$ to $q$ and acts as the identity on the remaining states is denoted by $(p \to q)$.
The identity transformation is denoted by {\bf 1}.

Let $\mathcal{D} = (Q_{n}, \Sigma, \delta,  0, F)$ be a DFA. For each word $w \in \Sigma^*$, the transition function induces a transformation $\delta_w$ of $Q_n$ by  $w$: for all $q \in Q_n$, 
$q\delta_w = \delta(q, w).$ 
The set $T_{\mathcal{D}}$ of all such transformations by non-empty words is the \emph{transition semigroup} of $\mathcal{D}$ under composition.
Often we use the word $w$ to denote the transformation $t$ it induces; thus we write $qw$ instead of $q\delta_w$. 
We also write 
$w\colon t$ to mean that $w$ induces the transformation $t$.

The size of the \emph{syntactic semigroup} of a regular language is another measure of the complexity of the language~\cite{Brz13}.
Write $\Sig^+$ for $\Sig^* \setminus \{\eps\}$.
The 
\emph{syntactic congruence} of a language $L\subseteq \Sigma^*$ is defined on $\Sigma^+$ as follows:
For $x, y \in \Sigma^+,  x \,{\mathbin{\approx_L}}\, y $  if and only if  $wxz\in L  \Leftrightarrow wyz\in L$ for all  $w,z \in\Sigma^*.
$
The quotient set $\Sigma^+/ {\mathbin{\approx_L}}$ of equivalence classes of  ${\mathbin{\approx_L}}$ is a semigroup, the \emph{syntactic semigroup} $T_L$ of $L$.
The syntactic semigroup is isomorphic to  the \emph{transition semigroup} of the minimal DFA of $L$~\cite{Pin97}.

The (left) \e{quotient} of $L\subseteq \Sig^*$ by a word $w\in \Sig^*$ is the language $w^{-1}L=\{x : wx \in L\}$. It is well known that the number of quotients of a regular language is finite and equal to the state complexity of the language.

The atoms of a regular language are defined by a left congruence, where two words $x$ and $y$ are congruent whenever 
 $ux\in L$ if and only if  $uy\in L$ for all $u\in \Sigma^*$. 
 Thus $x$ and $y$ are congruent whenever  $x\in u^{-1}L$ if and only if $y\in u^{-1}L$
 for all $u\in \Sigma^*$.
 An equivalence class of this relation is an \emph{atom} of $L$~\cite{BrTa14}. 
Atoms can be expressed as non-empty intersections of complemented and uncomplemented quotients of $L$.
The number of atoms and their state complexities were suggested as measures of complexity of regular languages~\cite{Brz13}
because all quotients of a language and all quotients of its atoms are unions of atoms~\cite{BrTa13,BrTa14,Iva16}.

\section{Main Results}
The automata described in~\cite{Nag06} that characterize union-free languages are called there \emph{one-cycle-free-path} automata. They are defined by the property that there is only one final state and a unique cycle-free path from each state to the final state.
We are now ready to define a most complex deterministic one-cycle-free-path DFA and its most complex deterministic union-free language.

The most complex stream below meets all of our complexity bounds. However, our witness uses three letters for restricted product whereas~\cite{JiMa11} uses binary witnesses. The same shortcoming of most complex streams occurs in the case of regular languages~\cite{Brz13}; that seems to be the price of getting a witness for all operations rather than minimizing the alphabet for each operation.

\begin{definition}
\label{def:union-free}
For $n\ge 3$, let 
$\mathcal{D}_n=\mathcal{D}_n(a,b,c, d)=(Q_n,\Sigma,\delta_n, 0, \{n-1\})$, where $\Sig=\{a,b,c, d\}$, and
$\delta_n$ is defined by the transformations 
$a\colon (1,\dots,n-1)$, $b\colon (0,1)$,  $c\colon (1\to 0)$, and $d\colon \one$;
see Figure~\ref{fig:union-free}.
Let $L_n=L_n(a,b,c, d)$ be the language accepted by~$\mathcal{D}_n(a,b,c, d)$.
\end{definition}

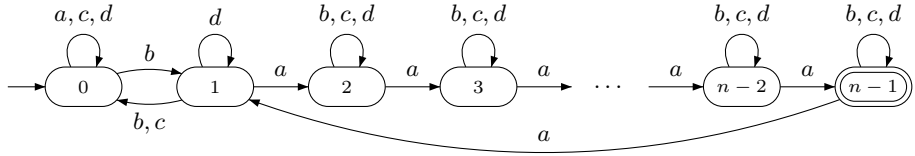
\begin{figure}[ht]
\unitlength 8.3pt
\begin{center}\begin{picture}(36,6)(0,4)
\gasset{Nh=1.8,Nw=3.5,Nmr=1.25,ELdist=0.4,loopdiam=1.5}
	{\scriptsize
\node(0)(1,7){0}\imark(0)
\node(1)(7,7){1}
\node(2)(13,7){2}
\node(3)(19,7){3}
}
\node[Nframe=n](3dots)(25,7){$\dots$}
	{\scriptsize
\node(n-2)(31,7){$n-2$}
	}
	{\scriptsize
\node(n-1)(37,7){$n-1$}\rmark(n-1)
	}
\drawedge[curvedepth= .8,ELdist=.4](0,1){$b$}	
\drawedge[curvedepth= .8,ELdist=.4](1,0){$b,c$}
\drawloop(0){$a,c,d$}
\drawedge[ELdist=.5](1,2){$a$}
\drawloop(1){$d$}
\drawloop(2){$b,c,d$}
\drawedge[ELdist=.5](n-2,n-1){$a$}
\drawloop(3){$b,c,d$}
\drawedge[ELdist=.5](3dots,n-2){$a$}
\drawedge[ELdist=.5](3,3dots){$a$}
\drawloop(n-2){$b,c,d$}
\drawedge[ELdist=.5](2,3){$a$}
\drawedge[curvedepth= 3.0,ELdist=-1.0](n-1,1){$a$}
\drawloop(n-1){$b,c,d$}
\end{picture}\end{center}
\caption{Most complex minimal one-cycle-free-path DFA  $\cD_n(a,b,c, d)$ of 
Definition~\ref{def:union-free}. }
\label{fig:union-free}
\end{figure}

The DFA of Definition 1 bears some similarities to the DFA for reversal in Fig. 6  in~\cite[p. 1650]{JiMa11}. 
It is evident that it is a one-cycle-free-path DFA.
Let $E= (a(b\cup c\cup d)^*)^{n-2}a$.
One verifies that 
\[
 \begin{array}{rl}
   	L_n  =  	 &[ (a \cup c \cup d) \cup b(d \cup E(b \cup c \cup d)^*a)^* (b \cup c)]^* \\
	  &b(d \cup E(b \cup c \cup d)^*a)^* E(b \cup c \cup 	d)^*.
 \end{array}
\]
Noting that $(E_1 \cup E_2\cup \cdots \cup E_k)^* =(E_1^*E_2^*\cdots E_k^*)^*$ for all regular expressions $E_i$, $i=1,\dots, k$, we obtain
a union-free expression for $L_n$.

\begin{theorem}[Most Complex Deterministic Union-Free Languages]
\label{thm:main}
For each $n\ge 3$, the DFA of Definition~\ref{def:union-free} is minimal and recognizes a deterministic union-free language. 
The stream $(L_n(a,b,c) \mid n \ge 3)$  with some dialect streams
is most complex in the class of deterministic union-free languages in the following sense:
\begin{enumerate}
\item
The syntactic semigroup of $L_n(a,b,c)$ has cardinality $n^n$, and at least 
three letters are required to reach this bound.
\item
Each quotient of $L_n(a,b)$ has complexity $n$.
\item
The reverse of $L_n(a,b,c)$ has complexity $2^n$.
Moreover, $L_n(a,b,c)$ has $2^n$ atoms. 
\item
Each atom  $A_S$  of $L_n(a,b,c)$ has maximal complexity:
\begin{equation*}
	\kappa(A_S) =
	\begin{cases}
		2^n-1, 			& \text{if $S\in \{\emptyset,Q_n\}$;}\\
		1+ \sum_{x=1}^{|S|}\sum_{y=1}^{n-|S|} \binom{n}{x}\binom{n-x}{y},
		 			& \text{if $\emptyset \subsetneq S \subsetneq Q_n$.}
		\end{cases}
\end{equation*}
\item
The star of $L_n(a,b)$ has complexity $2^{n-1}+2^{n-2}$.
\item 
	\begin{enumerate}
	\item
	Restricted  product:
	$\kappa(L_m(a,b,c) L_n(a,b,c)) = (m-1)2^n+2^{n-1}$.
	\item
	Unrestricted  product:
    $\kappa(L_m(a,b,c)L_n(a,b,c,d)) = m2^n+2^{n-1}$.
	\end{enumerate} 
\item 
	\begin{enumerate}
	\item
	Restricted boolean operations: For $(m,n) \neq (3,3)$,
	$\kappa(L_m(a,b)\circ L_n(b,a)) = mn$ for all binary boolean operations $\circ$ that depend on both arguments.
	\item
	Additionally, when $m\neq n$, 
	$\kappa(L_m(a,b)\circ L_n(a,b)) = mn$.
	\item
	 Unrestricted boolean operations ($\oplus$ denotes symmetric difference):
	 $$
	 \begin{cases}
	 \kappa(L_m(a,b,-,c) \circ L_n(b,a,-,d))=(m+1)(n+1) \text{ if } \circ\in \{\cup,\oplus\},\\
	\kappa(L_m(a,b,-,c) \setminus L_n(b,a))=mn+n,\\
 	L_m(a,b) \cap L_n(b,a)=mn.
	\end{cases}
	$$
	\end{enumerate}
\end{enumerate}
All of these bounds are maximal for deterministic union-free languages.
\end{theorem}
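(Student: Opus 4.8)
\medskip
\noindent\textbf{Proof sketch (plan).}
The whole argument rests on one structural fact, which I would establish first: the transition semigroup of $\cD_n$ is the full transformation monoid $\mathcal{T}_{Q_n}$. Before that I would dispose of the easy claims. Every state is reachable ($q$ by $ba^{q-1}$ for $q\ge 1$, and $0$ by $\eps$); distinct states $q<q'$ in $\{1,\dots,n-1\}$ are separated by the shortest power of $a$ they accept, and state $0$ accepts no word of $a^*$; hence $\cD_n$ is minimal. Its alphabet is minimal because $ba^{n-2}$, $ba^{n-2}c$ and $ba^{n-2}d$ all lie in $L_n$. That $L_n$ is deterministic union-free is read off Figure~\ref{fig:union-free}: from every state the unique cycle-free path to $n-1$ runs forward along the $a$-edges (after an initial $b$ when starting from $0$), since any step into state $0$ leads to a dead end. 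For item (1) I would compute $ab=(0,1,\dots,n-1)$; this $n$-cycle and the transposition $b=(0,1)$ generate $S_n$, and $S_n$ together with the rank-$(n-1)$ map $c=(1\to 0)$ generates all of $\mathcal{T}_{Q_n}$ (a classical fact). Thus the transition semigroup of $\cD_n(a,b,c)$, equivalently the syntactic semigroup of $L_n(a,b,c)$, has $n^n$ elements; two letters never suffice, since a generating set of $\mathcal{T}_{Q_n}$ must contain at least two permutations (to generate $S_n$) and at least one non-permutation, so $n\ge 3$ forces three generators. Item (2) follows at once: $\cD_n(a,b)$ is strongly connected (via the cycle $a$ and the transposition $b$) and its states are pairwise inequivalent, so every left quotient of $L_n(a,b)$, being the language of some state, has a minimal DFA with $n$ states.

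For item (3) I would use the standard consequence of a full transition semigroup for the reverse: in the subset automaton obtained from $\cD_n$ by reversing all transitions, with $\{n-1\}$ as start, an arbitrary subset $Y$ is reached by a word whose transformation has preimage of $n-1$ equal to $Y$, and two subsets are separated by a word whose transformation maps $0$ into their symmetric difference; hence all $2^n$ subsets are reachable and pairwise distinguishable and $\kappa(L_n^R)=2^n$. By the theory of atoms~\cite{BrTa14}, the number of atoms of $L_n$ equals $\kappa(L_n^R)$, so $L_n(a,b,c)$ has $2^n$ atoms. Item (4) is the technical core. I would recognize each atom $A_S$ by the DFA derived from the quotient automaton as in~\cite{BrTa14}, and show — again exploiting that every transformation of $Q_n$ is induced by some word — that the number of reachable, pairwise distinguishable states equals the stated value, which is the known maximum for regular languages; the two special atoms $A_\emptyset$ and $A_{Q_n}$ of complexity $2^n-1$ are treated separately from the generic double sum, exactly as in that theory.

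For item (5) I would build the $\eps$-NFA for $L_n(a,b)^*$ in the usual way (make $0$ final and add, for every transition into $n-1$, a parallel transition to the corresponding successor of $0$), determinize it, and verify that the reachable subsets are exactly those disjoint from $\{n-1\}$ together with those containing $\{0,n-1\}$, which number $2^{n-1}+2^{n-2}$, and that they are pairwise distinguishable. For items (6) and (7) I would use the (direct) product construction, augmenting the relevant operand with a dead state in the unrestricted cases, then (i) prove reachability of the claimed number of product states from the richness of the transition semigroups of $\cD_m$ and $\cD_n$ — this is where the third letter $c$ is needed for the restricted product, and where taking $L_n(b,a)$ (the roles of $a$ and $b$ exchanged) and using $c,d$ as private letters is essential, since without such a dialect the product automaton would collapse by symmetry — and (ii) prove pairwise distinguishability using \distlemma; the exceptional pair $(m,n)=(3,3)$ for restricted boolean operations is checked directly.

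Finally, the matching upper bounds are already in the literature: Jir\'askov\'a and Masopust~\cite{JiMa11} proved that reversal, star, product and boolean operations on deterministic union-free languages have exactly the same state complexities as on all regular languages, the unrestricted bounds are known~\cite{BrSi17b}, the atom bounds are those of~\cite{BrTa14}, and $n^n$ is plainly the largest a transition semigroup on $n$ states can be; combining these with the witnesses above shows every bound in Theorem~\ref{thm:main} is attained and maximal for deterministic union-free languages. I expect the hardest parts to be item (4) — a reachability and distinguishability analysis of a two-parameter family of state-sets, carried out for every $S\subseteq Q_n$ — and the unrestricted operations in items (6)(b) and (7)(c), where one must track the extra dead states, the private letters $c$ and $d$, and several sub-cases according to the boolean operation simultaneously; everything else is essentially bookkeeping on top of the single fact that the transition semigroup is $\mathcal{T}_{Q_n}$.
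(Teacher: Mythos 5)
Your overall route is essentially the paper's: establish that the transition semigroup of $\cD_n(a,b,c)$ is all of $\mathcal{T}_{Q_n}$ (via $ab=(0,1,\dots,n-1)$, $b=(0,1)$, $c=(1\to 0)$), deduce the reversal bound and the $2^n$ atoms from that together with the atom theory of~\cite{BrTa14}, and handle the operations by the standard constructions plus the known upper bounds. However, your construction for item (5) is wrong as stated: making state $0$ final (instead of adding a new initial accepting state $s$, as the paper does) changes the language, because $0$ is re-enterable. For example, $a$ and $bb$ are accepted by your modified automaton, yet neither lies in $(L_n(a,b))^*$ (every nonempty word of $L_n(a,b)$ begins with $b$, and $bb$ cannot be factored into words reaching $n-1$). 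Your claimed reachable family also contains $\emptyset$, which is never reachable since all transitions are total. The total $2^{n-1}+2^{n-2}$ happens to agree, but as written you compute the complexity of a strictly larger language; the fix is exactly the paper's construction, whose reachable states are $\{s\}$, the sets containing $0$, and the nonempty sets avoiding both $0$ and $n-1$, proved reachable by an induction on $|S|$.

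The other genuine shortfall is that items (6) and (7), where most of the real work lies, are only asserted. A direct product is not the construction for concatenation: the paper determinizes an NFA whose states are $\{q'\}\cup S$ with $q'\in Q'_m$ and $S\subseteq Q_n$, and reachability there is not a formality --- it uses explicit words such as $ba^{m-1}b$ together with a theorem of~\cite{Dav17}, and distinguishability requires a four-case analysis; the ``Distinguishability Lemma'' you invoke does not exist in this paper, so you would have to supply that argument. For the boolean operations the paper relies on Theorem~1 of~\cite{BBMR14}, which applies only for $m,n\ge 5$, so the cases needing direct computation are $(3,4)$, $(4,3)$ and $(4,4)$, not $(3,3)$ (which is excluded from the claim); and for the unrestricted bounds the distinguishability proof is imported from~\cite{BrSi17b,BrSi16} rather than following from semigroup richness alone, with the extra states in the unrestricted product arising because $d$ lies outside the first operand's alphabet (adding a dead state to the completed DFAs is how the paper treats the unrestricted boolean case, not the product). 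Your treatment of minimality, union-freeness, and items (1)--(4) is essentially the paper's, with your direct preimage argument for reversal replacing the citation of~\cite{SWY04}.
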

\begin{proof}
Only state 0 accepts $ba^{n-2}$, and the shortest word accepted by state $q$, $1\le q\le n-1$, is $a^{n-1-q}$. Hence all the states are distinguishable, and $\cD_n$ is minimal.
We  noted above that it recognizes a deterministic union-free language.

\be
\item
It is well known that the three transformations $a'\colon (0,\dots n-1)$, $b\colon (0,1)$, and $c\colon (1\to 0)$ generate all $n^n$ transformations of $Q_n$. We have $b$ and $c$ in $\cD_n$, and $a'$ is generated by $ab$. Hence our semigroup is maximal.

\item
This is easily verified.
\item
By~\cite{BrTa14} the  number of atoms is the same as the complexity of the reverse.
By~\cite{SWY04} the complexity of the reverse is $2^n$.
\item
The proof in~\cite{BrDa15} applies here as well.
\item 
We construct  an NFA for $(L_n(a,b))^*$ by taking $\cD_n(a,b)$ and adding a new initial accepting state $s$ with $s \tr{a} 0$ and $s \tr{b} 1$, and adding new transitions $n-2 \tr{a} 0$ and $n-1 \tr{b} 0$; then we determinize to get a DFA. For $S \subseteq Q_n$ and $a \in \Sig$, the transition function of the DFA is given by
\[ Sa = \begin{cases}
Sa \cup \{0\},&\text{if $n-1 \in Sa$;}\\
Sa,&\text{otherwise.}
\end{cases}
\]
We claim that the following states are reachable and pairwise distinguishable: the initial state $\{s\}$, states of the form $\{0\} \cup S$ with $S \subseteq Q_n \setminus \{0\}$, and non-empty states $S$ with $S \subseteq Q_n \setminus \{0,n-1\}$, for a total of $2^{n-1}+2^{n-2}$ states.

First consider states $\{0\} \cup S$ with $S \subseteq Q_n \setminus \{0\}$. We prove by induction on $|S|$ that all of these states are reachable. In the process, we will also show that $S$ is reachable when $\emp \ne S \subseteq Q_n \setminus \{0,n-1\}$. For the base case $|S| = 0$, note that we can reach $\{0\}$ from the initial state $\{s\}$ by $a$.

To reach $\{0\} \cup S$ with $S \subseteq Q_n \setminus \{0\}$ and $|S| > 0$, assume we can reach all states $\{0\} \cup T$ with $T \subseteq Q_n \setminus \{0\}$ and $|T| < |S|$. 
Let $q$ be the minimal element of $S$; then $1 \in Sa^{1-q}$.
More precisely, if $S = \{q,q_1,q_2,\dotsc,q_k\}$ with $1 \le q < q_1 < \dotsb < q_k \le n-1$, then $Sa^{1-q} = \{1,q_1-q+1,\dotsc,q_k-q+1\}$.
Set $T = Sa^{1-q} \setminus \{1\}$ and note that $|T| < |S|$.
By the induction hypothesis, we can reach $\{0\} \cup T$.
Apply $b$ to reach either $\{0,1\} \cup T$ (if $n-1 \in T$) or $\{1\} \cup T$ (if $n-1 \not\in T$).
Note that the only way we can have $n-1 \in T$ is if $n-1 \in S$ and $q = 1$.
Now apply $a^{q-1}$ to reach either $\{0\} \cup S$ (if $n-1 \in S$) or just $S$ (if $n-1 \not\in S$).
In the latter case, we can apply $a^{n-1}$ to reach $\{0\} \cup S$.

This shows that if $S \subseteq Q_n \setminus \{0\}$, then $\{0\} \cup S$ is reachable.
Furthermore, if $S \subseteq Q_n \setminus \{0,n-1\}$ then $S$ is reachable.

For distinguishability, if $S,T \subseteq Q_n$ and $S \ne T$, let $q$ be an element of the symmetric difference of $S$ and $T$. If $q \ne 0$ then $a^{n-1-q}$ distinguishes $S$ and $T$; if $q = 0$ use $ba^{n-2}$. To distinguish the accepting state $\{s\}$ from accepting states $S \subseteq Q_n$, use $b$.

\item
To avoid confusion between the states of $\cD_m$ and $\cD_n$, we mark the states of $\cD_m$ with primes: instead of $Q_m$ we use $Q'_m = \{0',1',2',\dotsc,(m-1)'\}$. In the restricted case, we construct an NFA for $L_m(a,b,c)L_n(a,b,c)$ by taking the disjoint union of $\cD_m(a,b,c)$ and $\cD_n(a,b,c)$, making state $(m-1)'$ non-final, and adding transitions $(m-2)' \tr{a} 0$ and $(m-1)' \tr{\sig} 0$ for $\sig \in \{b,c\}$; then we determinize to get a DFA. The states of this DFA are sets of the form $\{q'\} \cup S$, where $q' \in Q'_m$ and $S \subseteq Q_n$. For $a \in \Sig$, the transition function is given by
\[ (\{q'\} \cup S)a = \begin{cases}
\{q'a,0\} \cup Sa,&\text{if $q'a = (m-1)'$;}\\
\{q'a\} \cup Sa,&\text{otherwise.}
\end{cases}
\]
In the unrestricted case, we use the same construction with $\cD_m(a,b,c)$ and $\cD_n(a,b,c,d)$, but there are additional reachable states. In the NFA, if we are in subset $\{q'\} \cup S$, then by input $d$ we reach $S$, since $d$ is not in the alphabet of $\cD_m(a,b,c)$. So the determinization also has states $S$ where $S \subseteq Q_n$.

We claim the following states of our DFA for product are reachable and pairwise distinguishable: 
\bi
\item
\e{Restricted case:}
All states of the form $\{q'\} \cup S$ with $q' \ne (m-1)'$ and $S \subseteq Q_n$, and all states of the form $\{(m-1)',0\} \cup S$ with $S \subseteq Q_n \setminus \{0\}$.
\item
\e{Unrestricted case:}
All states from the restricted case, and all states $S$ where $S \subseteq Q_n$.
\ei
The initial state is $\{0'\}$, and we have 
\[ \{0'\} \tr{b} \{1'\} \tr{a^{m-2}} \{(m-1)',0\} \tr{a} \{1',0\} \tr{b} \{0',1\}. \]
That is, $\{0'\} \tr{ba^{m-1}b} \{0',1\}$. For $0 \le k \le n-2$ we have
$\{0',1\} \tr{a^k} \{0',1+k\}$, and $\{0',1\} \tr{c} \{0',0\}$. 
Thus all states of the form $\{0',q\}$ for $q \in Q_n$ are reachable from $\{0'\}$, using the set of words $\{x,xa,xa^2,\dotsb,xa^{n-2},xc\}$ where $x = ba^{m-1}b$. Since all of these words are permutations of $Q_n$ except for $xc$, by~\cite[Theorem 2]{Dav17} all states of the form $\{0'\} \cup S$ with $S \subseteq Q_n$ are reachable.
To reach $\{q'\} \cup S$ with $1 \le q \le m-2$, reach $\{0'\} \cup Sa^{-q}$ and apply $a^q$.
To reach $\{(m-1)',0\} \cup S$, reach $\{(m-2)'\} \cup Sa^{-1}$ and apply $a$.
In the unrestricted case, we can also reach each state $S$ from $\{0'\} \cup S$ by $d$.

To see all of these states are distinguishable, consider two distinct states $X \cup S$ and $Y \cup T$. 
In the restricted case, $X$ and $Y$ are singleton subsets of $Q'_m$; in the unrestricted case they may be singletons or empty sets.
In both cases $S$ and $T$ are arbitrary subsets of $Q_n$.
If $S \ne T$, let $q$ be an element of the symmetric difference of $S$ and $T$. If $q \ne 0$ then $a^{n-1-q}$ distinguishes the states; if $q = 0$ use $ba^{n-2}$. If $S = T$, then $X \ne Y$ and at least one of $X$ or $Y$ is non-empty. Assume without loss of generality that $Y$ is non-empty, say $Y = \{q'\}$, and assume $X$ is either empty or equal to $\{p'\}$ where $p < q$. We consider several cases:
\be[label={(\roman*)}]
\item
If $0 \not\in S$, then $a^{m-1-q}$ reduces this case to the case where $S \ne T$.
\item
If $0 \in S$ and $1 \not\in S$, and $\{p',q'\} \ne \{0',1'\}$, then $b$ reduces this to case (i).
\item
If $0,1 \in S$, and $\{p',q'\} \ne \{0',1'\}$, then $c$ reduces this to case (ii).
\item
If $\{p',q'\} = \{0',1'\}$, then $a$ reduces this to case (i), (ii) or (iii).
\ee
This shows that in both the restricted and unrestricted cases, all reachable states are pairwise distinguishable.

\item
	\be
	\item
    A binary boolean operation is \e{proper} if it depends on both arguments. For example, $\cup$, $\cap$, $\setminus$ and $\oplus$ are proper, whereas the operation $(L',L) \mapsto L$ is not proper since it depends only on the second argument.
	Since the transition semigroups of $\cD_m$ and $\cD_n$ are the symmetric groups 	$S_m$ and $S_n$, for $m,n\ge 5$, Theorem 1 of~\cite{BBMR14} applies, and all 	proper binary boolean operations have complexity $mn$. For $(m,n)\in \{(3,4),(4,3), 	(4,4)\}$ we have verified our claim by computation.
	
	\item
	This holds by \cite[Theorem 1]{BBMR14} as well.
	\item
	The upper bounds for unrestricted boolean operations on regular languages were
	derived in~\cite{BrSi17b}. 
	The proof that that the bounds are tight is very similar to the corresponding proof of  Theorem~1 in~\cite{BrSi17b}. 
For $m,n\ge 3$,  let $D'_m(a,b,-,c)$ be the dialect of $\cD'_m(a,b,c,d)$ where $c$ plays the role of $d$ and the alphabet is restricted to $\{a,b,c\}$, and let
$\cD_n(b,a,-,d)$ be the dialect of $\cD_n(a,b,c,d)$ in which $a$ and $b$ are permuted, and the alphabet is  restricted to $\{a,b,d\}$;
see Figure~\ref{fig:boolean}.

\begin{figure}[ht]
\unitlength 7.0pt
\begin{center}\begin{picture}(37,20)(-3.5,2)
\gasset{Nh=2.2,Nw=5.0,Nmr=1.25,ELdist=0.4,loopdiam=1.5}
	
\node(0')(-2,14){$0'$}\imark(0')
\node(1')(7,14){$1'$}
\node(2')(16,14){$2'$}
\node[Nframe=n](3dots')(25,14){$\dots$}
{\scriptsize
\node(m-1')(34,14){$(m-1)'$}\rmark(m-1')
}
\drawedge[curvedepth= 1.4,ELdist=-1.3](0',1'){$b$}
\drawedge[curvedepth= 1,ELdist=.3](1',0'){$b$}
\drawedge(1',2'){$a$}
\drawedge(2',3dots'){$a$}
\drawedge(3dots',m-1'){$a$}
\drawedge[curvedepth= -5.2,ELdist=-1](m-1',1'){$a$}
\drawloop(0'){$a,c$}
\drawloop(1'){$c$}
\drawloop(2'){$b,c$}
\drawloop(m-1'){$b,c$}

\gasset{Nh=2.2,Nw=5.0,Nmr=1.25,ELdist=0.4,loopdiam=1.5}
	
\node(0)(-2,7){0}\imark(0)
\node(1)(7,7){1}
\node(2)(16,7){2}
\node[Nframe=n](3dots)(25,7){$\dots$}
\node(n-1)(34,7){$n-1$}\rmark(n-1)
\drawloop(0){$b,d$}
\drawloop(1){$d$}
\drawloop(2){$a,d$}
\drawloop(n-1){$a,d$}
\drawedge[curvedepth= 1.2,ELdist=-1.0](0,1){$a$}
\drawedge[curvedepth= .8,ELdist=.25](1,0){$a$}
\drawedge(1,2){$b$}
\drawedge(2,3dots){$b$}
\drawedge(3dots,n-1){$b$}
\drawedge[curvedepth= 3.0,ELdist=-1.5](n-1,1){$b$}

\end{picture}\end{center}
\caption{Witnesses $D'_m(a,b,-,c)$ and $\cD_n(b,a,-,d)$ for boolean operations. }
\label{fig:boolean}
\end{figure}
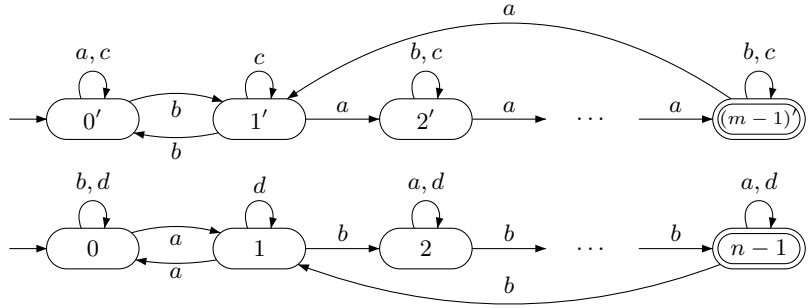

Next we complete the two DFAs by adding empty states. 
 Restricting both DFAs to the alphabet $\{a,b\}$,  leads us to the problem of determining the complexity of two DFAs over the same alphabet. 
In the direct product of the two DFAs, by \cite[Theorem 1]{BBMR14} and computation for the cases $(m,n)\in \{(3,4),(4,3),(4,4)\}$, all $mn$ states of the form $\{p',q\}$, $p'\in Q'_m$, $q\in Q_n$, are reachable and pairwise distinguishable by words in $\{a,b\}^*$ for all proper boolean operations. 
As shown in Figure~\ref{fig:cross}, 
the remaining states of the direct product are reachable; hence
all $(m+1)(n+1)$ states are reachable.

The proof of distinguishability of pairs of states in the direct product for the union, intersection and symmetric difference is the same as that in~\cite{BrSi17b}. 
The proof for difference given in~\cite{BrSi17b} is incorrect, but a corrected version is available in~\cite{BrSi16}.
\qed
	\ee
\ee
\end{proof}

\begin{figure}[h]
\unitlength 7.7pt
\begin{center}\begin{picture}(35,21)(0,-2)
\gasset{Nh=2.6,Nw=2.6,Nmr=1.2,ELdist=0.3,loopdiam=1.2}
	{\scriptsize
\node(0'0)(2,15){$0',0$}\imark(0'0)
\node(1'0)(2,10){$1',0$}
\node(2'0)(2,5){$2',0$}\rmark(2'0)
\node(3'0)(2,0){$\emp',0$}

\node(0'1)(10,15){$0',1$}
\node(1'1)(10,10){$1',1$}
\node(2'1)(10,5){$2',1$}\rmark(2'1)
\node(3'1)(10,0){$\emp',1$}

\node(0'2)(18,15){$0',2$}
\node(1'2)(18,10){$1',2$}
\node(2'2)(18,5){$2',2$}\rmark(2'2)
\node(3'2)(18,0){$\emp',2$}

\node(0'3)(26,15){$0',3$}\rmark(0'3)
\node(1'3)(26,10){$1',3$}\rmark(1'3)
\node(2'3)(26,5){$2',3$}\rmark(2'3)
\node(3'3)(26,0){$\emp',3$}\rmark(3'3)

\node(0'4)(34,15){$0',\emp$}
\node(1'4)(34,10){$1',\emp$}
\node(2'4)(34,5){$2',\emp$}\rmark(2'4)
\node(3'4)(34,0){$\emp',\emp$}
	}
\drawedge(3'0,3'1){$a$}
\drawedge(3'1,3'2){$b$}
\drawedge(3'2,3'3){$b$}
\drawedge[curvedepth=2,ELdist=.4](3'3,3'1){$b$}

\drawedge(0'4,1'4){$b$}
\drawedge(1'4,2'4){$a$}
\drawedge[curvedepth=-2,ELdist=-.9](2'4,1'4){$a$}
\drawedge(3'3,3'4){$c$}
\drawedge(2'4,3'4){$d$}

\drawedge[curvedepth=-3,ELdist=.4](0'0,3'0){$d$}
\drawedge[curvedepth=3,ELdist=.4](0'0,0'4){$c$}

\end{picture}\end{center}
\caption{Direct product for union shown partially.}
\label{fig:cross}
\end{figure}
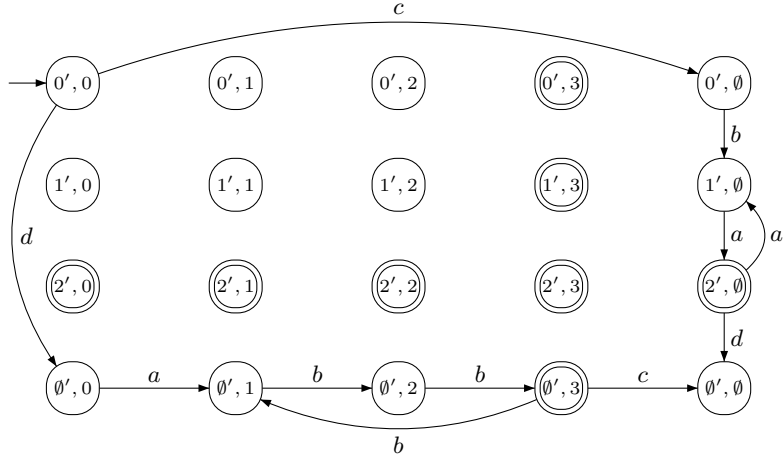

\section{Conclusions}

We have exhibited a single ternary language stream that is a witness for the maximal state complexities of star and reversal of union-free languages. Together with some dialects it also constitutes a witness for 
union, intersection, difference, symmetric difference, and product in case the alphabets of the two operands are the same. As was shown in~\cite{JiMa11} these bounds  are the same as those for regular languages.
We prove that our  witness also has the largest syntactic semigroup and most complex atoms, and that these complexities are again the same as those for arbitrary regular languages.
By adding a fourth input inducing the identity transformation to our witness we obtain witnesses for unrestricted binary operations, where the alphabets of the operands are not the same. The bounds here are again the same as those for regular languages.
In summary, this shows that the complexity measures proposed in~\cite{Brz13} do not distinguish union-free languages from regular languages.

\bibliographystyle{splncs03}
\bibliography{union-free}
\end{document}